\newtheorem{Theorem}{Theorem}
\newtheorem{Corollary}{Corollary}
\newtheorem{Lemma}{Lemma}
\newcommand{\bs}{\backslash}
\begin{document}

\bibliographystyle{IEEEtran} 

\title{On the $k$-pairs problem}  
\author{Ali Al-Bashabsheh and Abbas Yongacoglu           
              \\School of Information Technology and Engineering
              \\University of Ottawa, Ottawa, Canada
              \\\{aalba059, yongacog\}@site.uottawa.ca
}

\date{}
\maketitle

\begin{abstract} We consider network coding rates for directed and undirected $k$-pairs networks. For directed networks, meagerness is known to be an upper bound on network coding rates. We show that network coding rate can be $\Theta(|V|)$ multiplicative factor smaller than meagerness. For the undirected case, we show some progress in the direction of the $k$-pairs conjecture.
\end{abstract}

\baselineskip 20pt

\section{Introduction}

It is known that the min-cut is a necessary and sufficient condition for achievable throughputs in multicast networks \cite{Yeung2000:1}. 
In general networks, the min-cut is not a sufficient condition and there is no answer yet to what rates are achievable in such networks. In this work we consider the $k$-\emph{pairs problem} which is also referred to as the \emph{multiple unicast sessions} problem. For directed networks, it is known that network coding may provide higher rates than routing. On the other hand, for undirected $k$-pairs networks, Li and Li conjectured that network coding can not provide higher rates than fractional routing \cite{Li2004:2}. This conjecture has been verified to be true for few networks \cite{Li2004:2} \cite{Yeung2005:1} \cite{Harvey2006:1} \cite{Savari2006:1}.

\section{Definitions and Problem Formulation}

A directed graph $G(V,E)$ is specified by a set of nodes $V$ and a set of directed edges $E$ (an incidence function is not necessary since in this work all graphs are assumed to be simple graphs). For any edge $e = (u,v)$ we write $\mbox{head}(e) = v$ and $\mbox{tail}(e) = u$. For a node $v \in V$ we denote by In$(v) = \{e \in E: \mbox{head}(e) = v\}$ the set of all edges going into $v$ and by Out$(v) = \{e \in E: \mbox{tail}(e) = v\}$ the set of all edges departing from $v$. Moreover, for any $U \subseteq V$ we use In$(U) = \{e \in E: \mbox{head}(e) \in U, \ \mbox{tail}(e) \notin U\}$ to indicate the set of all edges entering $U$. Similarly, Out$(U) = \{e \in E: \mbox{tail}(e) \in U, \ \mbox{head}(e) \notin U\}$ denotes the set of edges outgoing from $U$. 

Similar to directed graphs, an undirected graph $G(V,E)$ is specified by two sets $V$ and $E$ where edges do not have a prespecified direction and can provide a bidirectional transportation of information. At some places, an undirected edge $e$ between nodes $u$ and $v$ might be replaced with two directed edges $(u,v)$ and $(v,u)$ whose capacities sum to the capacity of $e$. For notational ease, we might drop the parenthesis and use $uv$ and $vu$ to denote the edges directed from $u$ to $v$ and $v$ to $u$, respectively, while preserving the notation $\{u,v\}$ for the undirected edge between $u$ and $v$. The set of all directed edges obtained from $E$ will be denoted $E_{d}$, i.e., $E_{d} = \{(u,v):\{u,v\} \in E\}$. In this work, all edges are assumed to have unit capacity.

A directed (undirected) $k$-pairs network consists of an underlying directed (undirected) graph, $G(V,E)$, and a set of $k$ source-sink pairs. A source-sink pair uniquely identifies a commodity to be communicated from the source to the sink. Let ${\cal I} = \{1,2,\ldots,k\}$ be the set of commodities, then for any $i \in {\cal I}$ we use $s(i) \in V$ and $t(i) \in V$ to denote the nodes which (respectively) generates and demands commodity $i$. We refer to $s(i)$ as the \emph{source node} and $t(i)$ as the \emph{sink node} of $i \in {\cal I}$ and always assume $s(i) \neq t(i)$. Note that a node $v \in V$ can be a source node or a sink node for more than one commodity. 
We denote by $S(v)$ the set of all commodities for which $v$ is a source node, i.e., $S(v) = \{i \in {\cal I}: s(i) = v\}$. Similarly, let $T(v)$ be the set of all commodities for which $v$ is a sink node, i.e., $T(v) = \{i \in {\cal I}: t(i) = v\}$. Also for any set of nodes $U \subseteq V$, let $S(U) = \{i \in {\cal I}: s(i) \in U \}$ be the set of all commodities whose sources are in $U$ and $T(U) =\{i \in {\cal I}: t(i) \in U \}$ be the set of all commodities whose sinks are in $U$.

Given an undirected network, ${\cal N}$, with an underlying graph $G(V,E)$ and a set of commodities ${\cal I}$. A set of edges $A \subseteq E$ is said to \emph{separate} commodity $i \in {\cal I}$ if every path from $s(i)$ to $t(i)$ contains at least one edge from $A$. Let $\cal{J}$ be the set of commodities separated by $A$, then \emph{sparsity} \cite{Farhad1990:1} \cite{Lehman2005:1} of $A$ is defined as ${\mathscr S}(A) = |A|/|{\cal J}|$. Moreover, the sparsity of the graph is defined as ${\mathscr S}_{G} = \min_{A \subseteq E} {\mathscr S}(A)$. 
 It is clear that sparsity is a bottleneck for the communication pairs (Indeed, some authors refer to ${\mathscr S}_{G}$ as the min-cut bound \cite{Farhad1990:1}). Thus, in undirected networks, sparsity is an upper bound on achievable rates with or without network coding. 
Another bound on routing rates can be defined in terms of the Wiener index. For any pair of nodes $u \neq v \in V$ let $d(u,v)$ be the number of edges in the shortest path between $u$ and $v$ in $G$. 
 The \emph{wiener index} \cite{Wiener1947:1} of a graph $G$ is defined as $D_{G} = \sum_{\{u,v\} \subseteq V}{d(u,v)}$ which is a commonly used quantity in chemical literature. We define the wiener index of the network as $D_{\cal N} = \sum_{i \in {\cal I}}{d(s(i),t(i))}$. Obviously, if their is a commodity between every pair of distinct nodes in the network, i.e. $|{\cal I}| = \binom{|V|}{2}$, then the wiener indices of the graph and the network are equal. The \emph{wiener bound} of the network $\cal N$ is defined as $W_{\cal N} = |E|/D_{\cal N}$. Clearly $W_{{\cal N}}$ is an upper bound on achievable routing rates in undirected $k$-pairs networks. This follows since in routing, if an edge is used to transpose a fraction of commodity $i$, then an equal fraction of the edge capacity is exclusively used by such commodity, i.e, an edge does not carry a combination of messages from different commodities.

For directed networks, sparsity is still an upper bound on routing rates but it is not an upper bound on network coding rates. 
 \emph{Meagerness} was introduced in \cite{Lehman2005:1} to bound network coding rates in directed networks. For any set $A \subseteq E$ of edges and a set of commodities ${\cal J} \subseteq {\cal I}$ we say $A$ \emph{isolates} ${\cal J}$ if every path from $s(i)$ to $t(j)$ $\forall i, j \in {\cal J}$ contains at least one edge from $A$. The meagerness of set $A$ is defined as 
\[{\cal M}(A) = \min_{{\cal J}: A \ \mbox{isolates} \ {\cal J}}  \frac{|A|}{|{\cal J}|}  \] 
and the meagerness of the network $\cal N$ is defined as ${\cal M}_{\cal N} = \min_{A \subseteq E} {\cal M}(A)$.

With each commodity $i \in {\cal I}$ we associate a R.V. $X_{i}$ which represents a message generated at $s(i)$ and to be correctly recovered at $t(i)$. For notational convenience we might use set subscript. More specifically, let $A$ be any set, then $X_{A} = \{X_{a}:a \in A\}$ (if $A$ is empty we set $X_{A}$ to be a constant). Also with each directed edge $e = uv$ we associate a R.V. $X_{uv}$ which is a deterministic function of $X_{S(u)}$ and $X_{\mbox{In}(u)}$. A sink node $t(i)$ must be able to recover its message using only the information available from $\mbox{In}(t(i))$ and $\{X_{j}:j \in S(t(i)) \}$. In other words, each sink recovers its message by computing a function of $X_{\mbox{In}(t(i))}$ and $X_{S(t(i))}$. The set of edge functions and sink functions defines a \emph{network code}. Such network code implies an achievable rate tuple $(r_{1}, \ldots, r_{k})$ where $r_{i} \leq H(X_{i})$ is the rate at which the $i$th commodity is communicated. Obviously, $H(X_{e})$ must not exceed the edge capacity $\forall e \in E$. An \emph{achievable symmetric rate} is the rate tuple $(r,r,\ldots,r)$ which can be uniquely identified with the scalar $r$. The \emph{network coding rate} is defined as the supremum of all achievable symmetric rates with network coding. 

The condition that $X_{uv}$ is a function of $X_{S(u)}$ and $X_{\mbox{In}(u)}$ is equivalent to $H(X_{\mbox{In}(u)},X_{S(u)},X_{uv})$ = $H(X_{\mbox{In}(u)},X_{S(u)})$ since $H(X_{uv}|X_{\mbox{In}(u)},X_{S(u)}) = 0$. From monotonicity of entropy, the previous equality can be written as $H(X_{\mbox{In}(u)},X_{S(u)},X_{uv}) \leq H(X_{\mbox{In}(u)},X_{S(u)})$. This has motivated the authors in \cite{Yeung2005:1} to define the \emph{input-output} inequality which states that for any $U \subseteq V$, $H(X_{\mbox{In}(U)},X_{S(U)},X_{\mbox{Out}(U)},X_{T(U)} ) \leq H(X_{\mbox{In}(U)},X_{S(U)})$. Finally, at some places we use the \emph{submodularity} of entropy which asserts that for any sets $A_{1}$ and $A_{2}$ we have $H(X_{A_{1}}) + H(X_{A_{2}}) \geq H(X_{A_{1} \cup A_{2}}) + H(X_{A_{1} \cap A_{2}})$.

\section{Directed Networks}

Meagerness was introduced in \cite{Lehman2005:1} to bound network coding rates in directed networks. In the same work, the authors provided a network referred to as the split butterfly to illustrate that the meagerness bound might not be tight. In this section we show that network coding rate can be $\Theta(|V|)$ multiplicative factor smaller than meagerness. This shows that for some networks which exhibit some topological asymmetries, meagerness may become too loose and terribly fails to tightly bound such networks' coding rates\footnote{Recently, it was brought to our attention (see acknowledgment) that a similar result was obtained in \cite{Harvey2005:1}. However, a different network topology was used in the proof. }.

Let ${\cal N}_{1}$ be a directed $k$-pairs network with a set of commodities ${\cal I} = \{1,2,\ldots,k\}$. The nodes of the underlying graph consist of $k$ source nodes $s(1), \ldots, s(k)$, two intermediate nodes $u,v$, and $k$ sink nodes $t(1),\ldots, t(k)$. The set of edges can be described as follows: There is an edge from every source node to the intermediate node $u$ and there is an edge from $v$ to every sink node. A single edge connects $u$ to $v$. Finally, every sink node $t(i)$ has an incoming edge from $s(j)$ $\forall i < j$. Fig.\ref{fig:Meagerness} shows network ${\cal N}_{1}$. 

\begin{figure}[htbp]
		\centering \setlength{\unitlength}{0.38cm}
		\begin{small}
		\begin{picture}(18,12.5)(-7.8,1.5)
		\put(-1,1.5){\includegraphics[width = 5cm,height = 4.2cm]{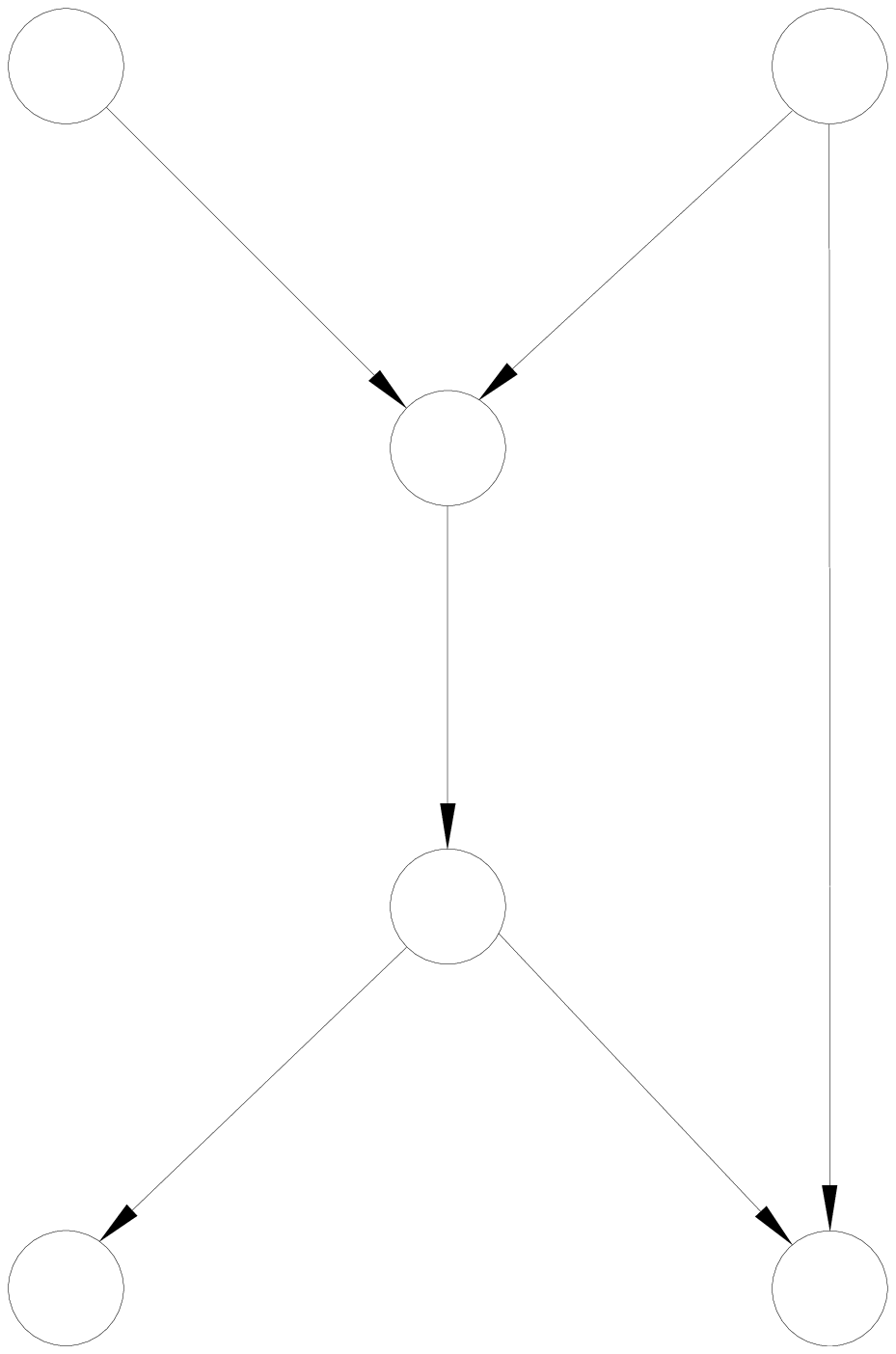} }
		\put(-1.2,12.7){$s(1)$} \put(4.2,12.7){$s(2)$} \put(1.3,8.6){$u$} \put(1.3,5){$v$} 
		\put(-1.2,.8){$t(2)$} \put(4.2,.8){$t(1)$} \put(2,-.4){(b)}
		\end{picture}
	  
	  \begin{picture}(18,0)(-8,.5)
		\put(6,1.5){ \includegraphics[width = 5cm,height = 4.2cm]{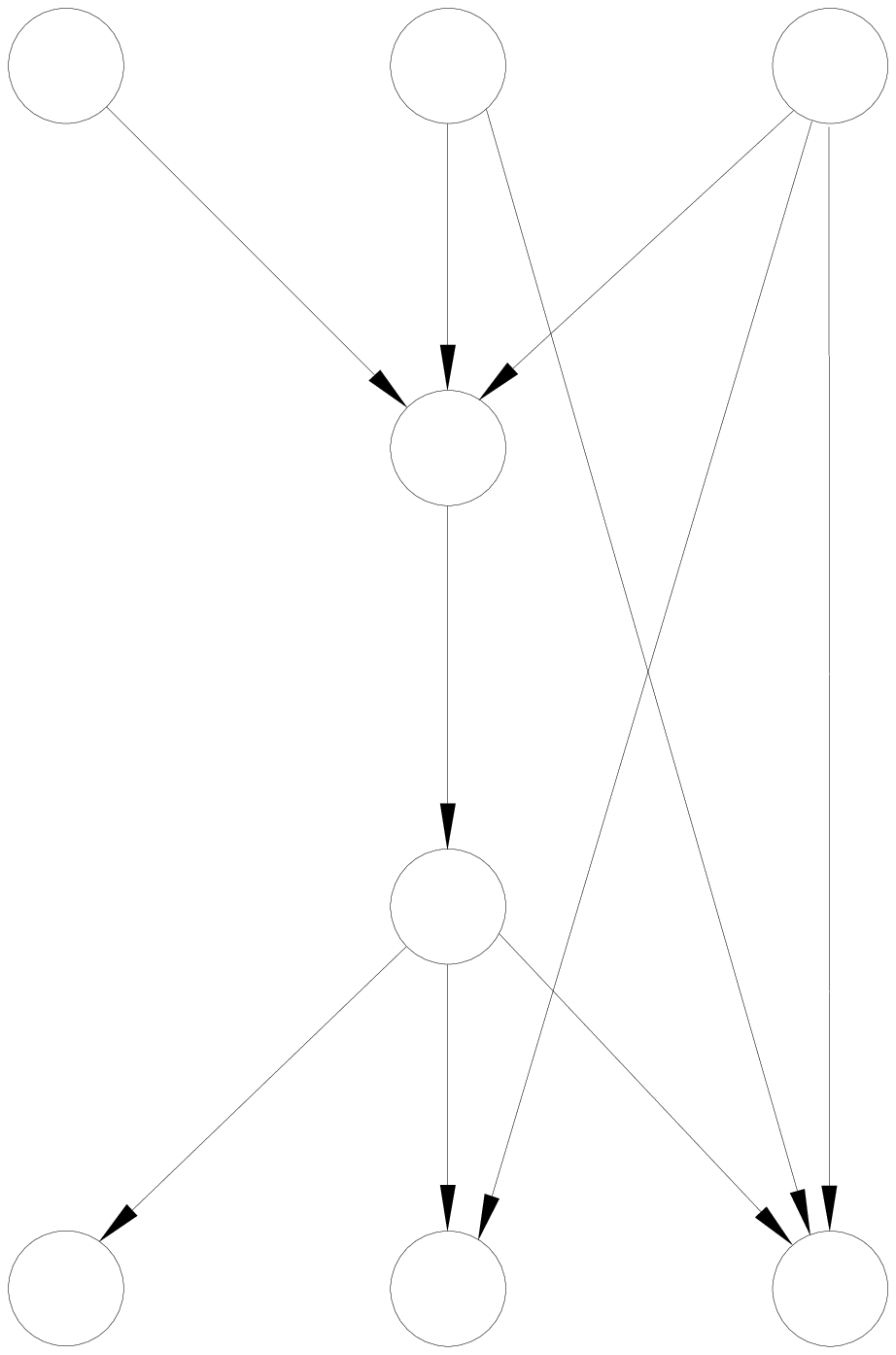} }
		\put(6.2,12.6){$s(1)$} \put(8.7,12.6){$s(2)$} \put(11.2,12.6){$s(3)$} \put(8.5,8.6){$u$} \put(8.5,5){$v$} 
		\put(6.2,.8){$t(3)$} \put(8.7,.8){$t(2)$} \put(11.2,.8){$t(1)$} \put(9.2,-.4){(c)}
		\end{picture}
		
		\begin{picture}(18,0)(15.5,-1.9)
		\put(13.4,0){ \includegraphics[width = 5cm,height = 4.2cm]{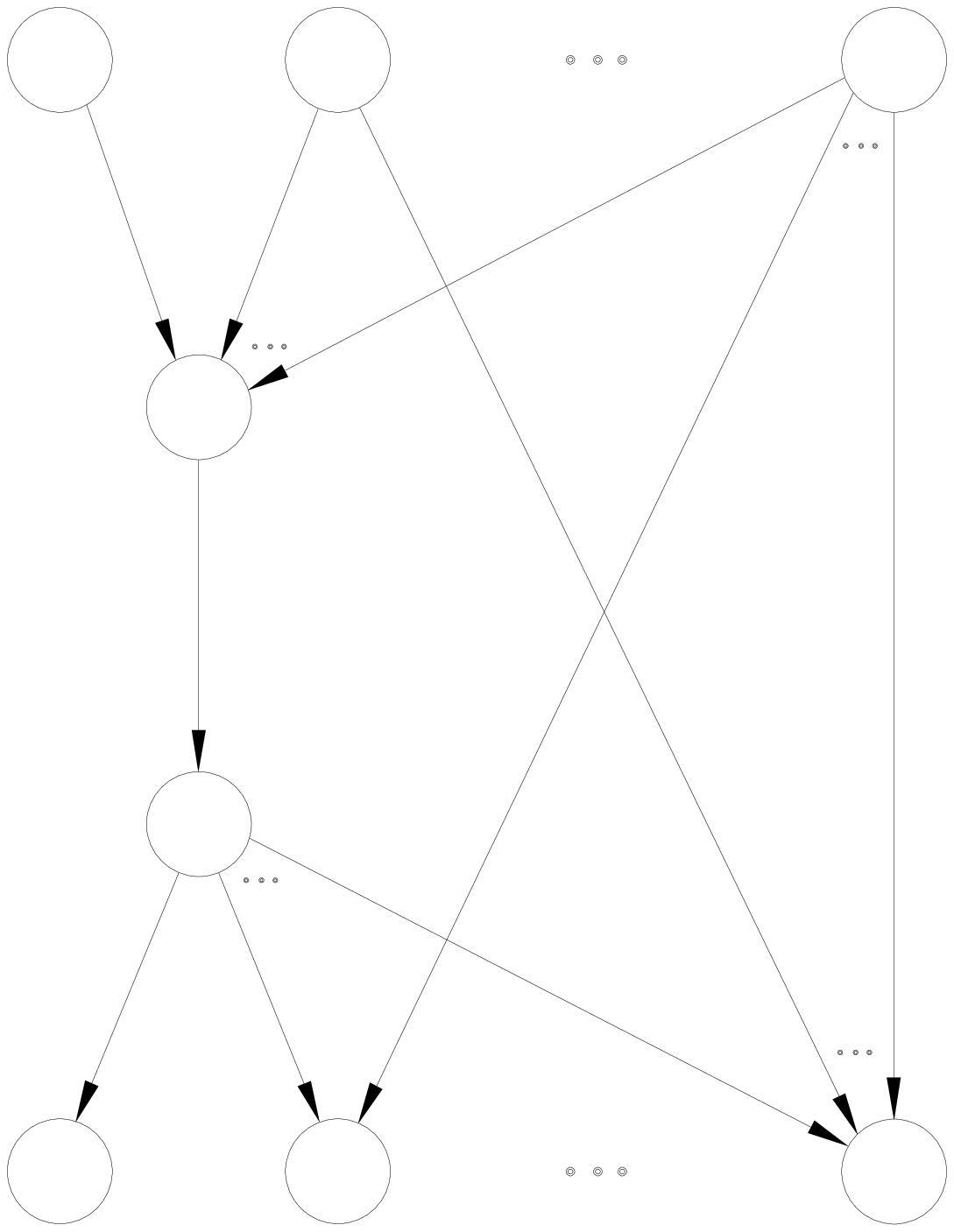} }
		\put(13.7,11.2){$s(1)$} \put(16.1,11.2){$s(2)$} \put(20,11.2){$s(k)$} \put(14.3,7.2){$u$} \put(14.3,3.5){$v$}
		\put(13.7,-.75){$t(k)$} \put(15.8,-.75){$t(k-1)$} \put(20,-.75){$t(1)$} \put(17,-1.8){(a)}
		
		\end{picture}
		\end{small}
		\caption{Network ${\cal N}_{1}$: (a)-The network for any $k$, (b)- An instance of ${\cal N}_{1}$ with $k=2$, (c)- An instance of ${\cal N}_{1}$ with $k=3$}
	\label{fig:Meagerness}
\end{figure}


\vspace{.5cm}
\begin{Lemma}
The value of the most meager cut in ${\cal N}_{1}$ is $1$.
\label{Lemma:meager}
\end{Lemma}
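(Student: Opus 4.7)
The plan is to prove ${\cal M}_{{\cal N}_1} = 1$ by establishing matching upper and lower bounds. For the upper bound I would exploit the very rigid structure of ${\cal N}_1$: at each source $s(i)$ the only outgoing edges are $s(i) \to u$ and the direct shortcuts $s(i) \to t(j)$ for $j < i$; the single edge leaving $u$ is $(u,v)$; and the sinks have no outgoing edges. Hence the unique directed path from $s(i)$ to $t(i)$ is $s(i) \to u \to v \to t(i)$, so taking $A = \{(u,v)\}$ and ${\cal J} = \{i\}$ for any commodity $i$ already witnesses ${\cal M}(A) = 1$, giving ${\cal M}_{{\cal N}_1} \leq 1$.

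For the lower bound I would fix any isolating pair $(A, {\cal J})$ with $|{\cal J}| = m \geq 1$ and argue $|A| \geq m$ by a case split on whether the bottleneck edge $(u,v)$ lies in $A$. If $(u,v) \in A$, then for every pair $i < j$ in ${\cal J}$ the direct edge $s(j) \to t(i)$ is a length-one $s(j)$-to-$t(i)$ path that avoids $(u,v)$, so it must also belong to $A$; counting these $\binom{m}{2}$ direct edges together with $(u,v)$ yields $|A| \geq 1 + \binom{m}{2} \geq m$. If $(u,v) \notin A$, let $I = \{i \in {\cal J} : s(i) \to u \in A\}$ and $J = \{j \in {\cal J} : v \to t(j) \in A\}$; to cut the path $s(i) \to u \to v \to t(j)$ for every ordered pair $(i,j) \in {\cal J}^2$ one must have $I = {\cal J}$ or $J = {\cal J}$, so $|A| \geq \max(|I|, |J|) \geq m$. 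Either way $|A|/|{\cal J}| \geq 1$, so combined with the upper bound the meagerness is exactly $1$.

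The argument is essentially combinatorial path enumeration, with no need for entropy inequalities or LP duality, so I do not anticipate a real obstacle. The one spot to stay careful is the branch where $(u,v) \in A$: the direct shortcut edges must be counted separately from $(u,v)$ itself, since $(u,v)$ alone does not cut the length-one paths that define the shortcuts.
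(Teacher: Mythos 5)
Your proof is correct, and its lower-bound argument is organized differently from the paper's. The paper asserts that every isolating set must contain $e=(u,v)$ (because each commodity $i$ has the path $s(i)\to u\to v\to t(i)$ through $e$), and then, for $|{\cal J}|\geq 2$, adds the $|{\cal J}|-1$ shortcut edges from the highest-indexed source of ${\cal J}$ to the other sinks of ${\cal J}$ to reach $|A|\geq|{\cal J}|$. As stated, that first assertion does not follow — the path through $e$ could instead be cut at $s(i)\to u$ or $v\to t(i)$, and indeed $A=\{s(1)\to u\}$ isolates $\{1\}$ without containing $e$ — so the paper's argument implicitly skips the case $e\notin A$. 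Your case split repairs exactly this: when $e\in A$ you count the $\binom{m}{2}$ shortcut edges forced into $A$ (more than the $m-1$ the paper uses, but the bound $1+\binom{m}{2}\geq m$ still closes), and when $e\notin A$ your covering argument ($I={\cal J}$ or $J={\cal J}$, hence $|A|\geq m$) handles the configuration the paper's proof does not address. The paper's route is shorter and its edge count is tighter per case; yours is airtight over all isolating sets. Both correctly establish the matching upper bound via $A=\{e\}$, ${\cal J}=\{i\}$.
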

\begin{proof}
For any set of commodities ${\cal J} \subseteq {\cal I}$ we determine the meagerness of the most meager set of edges that isolates $\cal J$. First note that $\forall i \in {\cal I}$ there exists a path from $s(i)$ to $t(i)$ passing through the edge $e = (u,v)$. Thus any isolating set $A$ must contain $e$. Now consider the following two cases:
\begin{itemize}
\item If $|{\cal J}| = 1$, then $A = \{e\}$ and ${\cal M}(A) = 1$. 

\item If $|{\cal J}| \geq 2$. Let ${\cal J} = \{ i_{1},i_{2},\ldots,i_{|{\cal J}|} \}$ where without loss of generality $1 \leq i_{1} < i_{2} < \ldots < i_{|{\cal J}|} \leq k$. Since a cut $A$ must isolate all  the commodities in ${\cal J}$, it must isolate $s( i_{|{\cal J}|} )$ from all sinks $t(i_{1}), t({i_{2}}), \ldots, t(i_{|{\cal J}|} )$. From the structure of ${\cal N}_{1}$, there exists an edge from $s(i_{|{\cal J}|})$ to every sink node $t({i})$, $\forall i \in {\cal J} \bs \{i_{|{\cal J}|}\}$. Let $F$ be the set of such edges, then $|F| = |{\cal J}| - 1$ and for any isolating set, $A$, we must have $\{e\} \cup F \subseteq A$. Therefore, the capacity of any isolating set $A$ is at least $|{\cal J}|$. Therefore, ${\cal M}(A) \geq 1$.

\end{itemize}
The lemma follows by noting that ${\cal M}_{{\cal N}_{1}} = \min_{A \subseteq E} {\cal M}(A) = 1$.  
\end{proof}

\begin{Theorem}
There exist unit capacity, directed acyclic $k$-pairs networks where the network coding rate is $\Theta(|V|)$ multiplicative factor smaller than meagerness. 
\label{Theorem:meager}
\end{Theorem}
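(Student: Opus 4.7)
The plan is to reuse the same family ${\cal N}_{1}$ that witnessed Lemma \ref{Lemma:meager}. Since $|V|=2k+2$ in ${\cal N}_{1}$, it is enough to prove two matching statements: first, that every admissible network code gives symmetric rate at most $1/k$; second, that symmetric rate $1/k$ is attainable. Combined with ${\cal M}_{{\cal N}_{1}}=1$ from Lemma \ref{Lemma:meager}, this yields the ratio ${\cal M}_{{\cal N}_{1}}/r = k = \Theta(|V|)$.

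The upper bound is where the real work lies, and the key structural observation is that $u$ has exactly one outgoing edge, $(u,v)$, so \emph{every} path from any $s(i)$ to its $t(i)$ must use this single unit-capacity bottleneck (the direct edges $(s(j),t(i))$ for $j>i$ carry information only about the unrelated commodity $j$, since $X_{s(j)t(i)}$ is a function of $X_{S(s(j))}=\{X_{j}\}$). I would formalize this with a sequential chain-rule argument on $X_{uv}$. For each $i$, sink $t(i)$ decodes $X_{i}$ from $X_{v\,t(i)}$ and $\{X_{s(j)t(i)}:j>i\}$, so
\[
 H\bigl(X_{i}\mid X_{v\,t(i)},X_{s(i+1)t(i)},\ldots,X_{s(k)t(i)}\bigr)=0.
\]
Replacing each edge variable by the variable it is a function of (which can only decrease conditional entropy) gives $H(X_{i}\mid X_{uv},X_{i+1},\ldots,X_{k})=0$. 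Expanding $H(X_{1},\ldots,X_{k}\mid X_{uv})$ by the chain rule in the order $k,k{-}1,\ldots,1$ makes every summand zero, so $X_{uv}$ determines the full source vector. Independence of the $X_{i}$'s plus the unit-capacity constraint $H(X_{uv})\le 1$ then forces $kr=\sum_{i}H(X_{i})\le 1$, i.e., $r\le 1/k$.

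The matching lower bound is easy: simple time-sharing on $(u,v)$ routes commodity $i$ along the path $s(i)\to u\to v\to t(i)$ using a $1/k$ fraction of the bottleneck, and all other edges on this path have unit capacity, so rate $1/k$ is achievable without coding. Combining the two bounds with Lemma \ref{Lemma:meager} and $k=(|V|-2)/2$ completes the proof.

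The main obstacle I anticipate is making the ``subtract off the side-information'' step fully rigorous in the information-theoretic setting, i.e., justifying the transition from the decoding equation at $t(i)$ (where the conditioning variables are the edge symbols $X_{s(j)t(i)}$) to the cleaner inequality conditioned on $X_{i+1},\ldots,X_{k}$ themselves. This only requires the monotonicity fact that conditioning on a random variable reduces entropy at least as much as conditioning on any function of it, but it must be stated carefully so that the subsequent chain-rule telescoping is valid. Once that reduction is in hand, the rest of the argument is a one-line application of independence and the edge-capacity bound $H(X_{uv})\le 1$.
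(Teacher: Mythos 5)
Your proposal is correct and follows essentially the same route as the paper: both arguments exploit that $t(k), t(k-1), \ldots, t(1)$ decode in sequence using only the bottleneck edge $(u,v)$ and the direct edges carrying $X_{j}$ for $j>i$, so that $X_{uv}$ determines all $k$ sources and $kr \leq H(X_{uv}) \leq 1$; your conditional-entropy/chain-rule phrasing is just a reformulation of the paper's cascade of joint-entropy inequalities and forward substitution. The only substantive addition is that you explicitly verify achievability of rate $1/k$ by routing, which the paper leaves implicit but which is needed to pin the ratio down to $\Theta(|V|)$ rather than merely $\Omega(|V|)$.
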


\begin{proof}
Consider the network ${\cal N}_{1}$ with $k$ sources as in Fig.\ref{fig:Meagerness}. Note that $t(k)$ must recover the message of $s(k)$, i.e. $X_{k}$, from the information carried by $e$. Similarly $t(k-1)$ recovers $X_{k-1}$ as a function of $X_{k}$ and $X_{e}$ and so on until $t(1)$ where $X_{1}$ is computed as a function of $X_{2}, \ldots, X_{k}$ and $X_{e}$. Thus we have
\begin{eqnarray}
&\hspace{-.4cm}t(k)\hspace{-.1cm} \mbox{\ \ gives:}& \hspace{-.2cm} H(X_{k},X_{e}) \leq H(X_{e}) \\
&\hspace{-.4cm}t(k-1)\hspace{.2cm} :& \hspace{-.2cm} H(X_{k-1}, X_{k}, X_{e}) \leq H(X_{k}, X_{e})\\
& \vdots & \nonumber \\
&\hspace{-1.4cm}t(2)\hspace{-.1cm} :& \hspace{-1.3cm} H(X_{2}, X_{3}, \ldots, X_{k}, X_{e}) \! \leq \! H(X_{3}, X_{4}, \ldots, X_{k}, X_{e}) \\
&\hspace{-1.4cm}t(1)\hspace{-.1cm} :& \hspace{-1.3cm} H(X_{1}, X_{2}, \ldots, X_{k}, X_{e}) \! \leq \! H(X_{2}, X_{3}, \ldots, X_{k}, X_{e})  
\end{eqnarray}
Applying forward substitution on the previous set of inequalities we obtain
\begin{eqnarray}
H(X_{e}) &\geq& H(X_{1}, X_{2}, \ldots, X_{k}, X_{e}) \\
         &\geq& H(X_{1}, X_{2}, \ldots, X_{k}) \label{eq2:meager_1} \\
         & = &  \sum_{i \in {\cal I}} H(X_{i}) \label{eq2:meager_2} \\
         &\geq& \sum_{i \in {\cal I}} r_{i} \label{eq2:meager_3}	\\
         & = & rk \label{eq2:meager_4}
\end{eqnarray}
where (\ref{eq2:meager_1}) follows since entropy is non-decreasing and (\ref{eq2:meager_2}) is due to the independence of sources. (\ref{eq2:meager_3}) and (\ref{eq2:meager_4}) follows from the definitions of rate and symmetric rate. Since edge $e$ has unit capacity we have $H(X_{e}) \leq 1$. Thus, the network coding rate is upper bounded as 
\begin{eqnarray}
r \leq \frac{1}{k}
\label{eq:rate_N1}
\end{eqnarray}
The theorem follows from (\ref{eq:rate_N1}) and Lemma \ref{Lemma:meager} by noting that $k = \Theta(|V|)$ for ${\cal N}_{1}$.
\end{proof}

\section{Undirected Networks}

Undirected $k$-pairs networks where considered in \cite{Li2004:2} where it was conjectured that network coding can not provide any rate improvement over routing. Since sparsity, ${\mathscr S}$, is an upper bound of both routing and network coding rates, the conjecture trivially holds true if the routing rate is equal to ${\mathscr S}$. Hence, to verify the validity of the conjecture, one must consider networks whose routing rate is strictly less than their sparsity. Hereafter we refer to such networks as \emph{gaped} networks. One such network that has been extensively considered is the Okamura-Seymour, OS, network \cite{Okamura1981:1}. The OS network is a $4$-pairs undirected network with $|V| = 5$, $|E| = 6$ whose Weiner bound $W = 3/4$, sparsity ${\mathscr S} = 1$ and routing rate equals $W$. 
It is not hard to verify that the underlying graph, $G$, of the OS network exhibits the smallest number of vertices among all underlying graphs, non-isomorphic to $G$, of $4$-pairs gaped networks (note that different networks might have the same unlabeled graph as their underlying graph). In \cite{Yeung2005:1} \cite{Harvey2006:1} \cite{Savari2006:1} it was independently shown that the network coding rate of the OS network is indeed equal to the routing rate. Hence, moving one step toward the $k$-pairs conjecture. 

Another class of networks for which the conjecture has been verified is the set of \emph{special bipartite} networks \cite{Harvey2006:1}. A summary of networks for which the conjecture holds true (including the ones obtained in the next two subsections) is listed below. Note that the classes in the list are not disjoint and might greatly intersect.

\begin{itemize}
\item $k$-pairs networks whose maximum achievable rates are equal to their sparsity. An undirect $k$-pairs network ${\cal N}$ is known to belong to this class of networks if
\begin{itemize}
\item ${\cal N}$ has one commodity \cite{Ford1956:1}.
\item ${\cal N}$ has two commodities, i.e. $k = 2$, \cite{Hu1963:1}.
\item ${\cal N}$ has an underlying planar graph, $G$, that can be drawn such that all source and sink nodes lay on the outer face of $G$ \cite{Okamura1981:1}.

\end{itemize}

\item ${\cal N}$ is the Okamura-Seymour network \cite{Yeung2005:1} \cite{Harvey2006:1} \cite{Savari2006:1}.

\item ${\cal N}$ is a special bipartite network \cite{Harvey2006:1}, \cite{Matula1990:1}.

\item ${\cal N}$ is the three commodity network in figure \ref{fig:Hu}.

\item ${\cal N}$ is a Type-I bipartite network, corollary \ref{cor:Type-I}.

\item ${\cal N}$ is a Type-II bipartite network, corollary \ref{cor:Type-II}.

\end{itemize}

\subsection{A Three-Commodity Network}
\label{Hu}

In this subsection we consider a three commodity network ${\cal N}_{2}$, Fig.\ref{fig:Hu}. It is known that routing can not achieve the sparsity (min-cut) of this network \cite{Hu1963:1}. To see this, consider all possible cuts in the network. It can be seen that sparsity is $4/3$. But the Wiener bound asserts that the routing rate can not exceed $8/7$. It is easy to advise a routing scheme achieving rate $8/7$ for ${\cal N}_{2}$. In the following we show that network coding does not have any rate advantages over routing and thus confirm the $k$-pairs conjecture over this network. Network ${\cal N}_{2}$ was considered in \cite{Savari2006:1} when all edges have capacity 2. The authors used an algorithm called \emph{progressive $d$-separating edge-set} or PdE to show that network coding can not achieve the rate tuple $(r_{a},r_{b},r_{g}) = (1,4,2)$ in ${\cal N}_{2}$. 
\begin{figure}[htbp]
		\centering \setlength{\unitlength}{0.38cm}
		\begin{picture}(18,13)(1,0)		
			\put(0,1){\includegraphics[width = 6.5cm,height = 5cm]{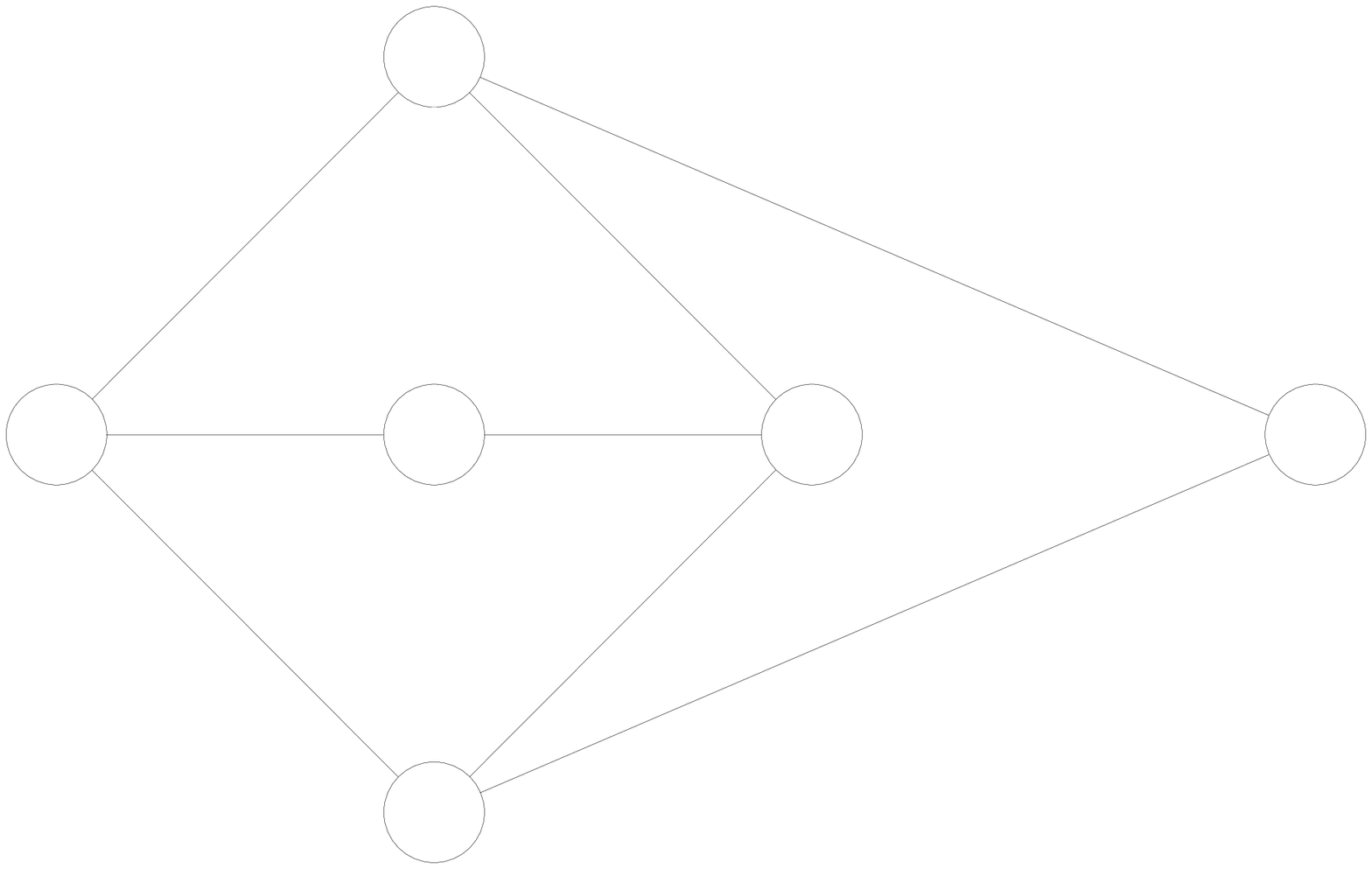} }
			\put(.5,6.2){$g$} \put(0,7.4){$s(g)$}
			\put(5.2,11){$a$} \put(4.7,12){$s(a)$} 
			\put(5.2,6.2){$b$} \put(5,7.4){$s(b)$} 
			\put(5.2,1.5){$c$} \put(4.7,0.5){$t(a)$}
			\put(9.8,6.2){$h$} \put(9.3,7.4){$t(g)$}
			\put(16,6.2){$f$} \put(15.5,7.2){$t(b)$}
		\end{picture}
		\caption{Network ${\cal N}_{2}$}
	\label{fig:Hu}
\end{figure}

\begin{Theorem}
The network coding rate for ${\cal N}_{2}$ is ${8}/{7}$.
\label{Theorem:N_Hu}
\end{Theorem}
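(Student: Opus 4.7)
Since routing already achieves $8/7$ for $\mathcal{N}_2$ and network coding subsumes routing, the lower bound $r \geq 8/7$ is free. The plan is to establish the matching upper bound $r \leq 8/7$ using the tools introduced earlier in the paper: the unit-capacity constraint $H(X_e) \leq 1$, the input-output inequality, submodularity, source independence, and sink decodability.

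The first move is to convert the undirected network to its directed version $E_d$ by splitting each undirected edge $\{u,v\}$ into directed edges $uv$ and $vu$ with nonnegative capacities summing to $1$. This gives the global entropy budget
\[
\sum_{e \in E_d} H(X_e) \;\leq\; |E| \;=\; 8.
\]
The goal is then to produce an inequality of the form $7r \leq \sum_{e \in E_d} H(X_e)$, which combined with the budget yields $7r \leq 8$. Since $D_{\mathcal N} = d(s(a),t(a)) + d(s(b),t(b)) + d(s(g),t(g)) = 7$, the target weight $7r$ is a hint: the proof should charge $r$ units of entropy to each edge of a shortest path for each commodity, with the three geodesics together using at most the full edge budget.

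The second move is to apply the input-output inequality at a carefully chosen small family of vertex subsets $U_1, \dots, U_m \subseteq V$, namely the source singletons $\{s(a)\}$, $\{s(b)\}$, $\{s(g)\}$, the sink-isolating sets $V \setminus \{t(i)\}$, and intermediate cuts that separate each sink from its source along a specific geodesic (in particular, cuts around the central hub node through which all three commodities are routed). For each $U_j$ this gives
\[
H\!\left(X_{\mathrm{In}(U_j)}, X_{S(U_j)}, X_{\mathrm{Out}(U_j)}, X_{T(U_j)}\right) \;\leq\; H\!\left(X_{\mathrm{In}(U_j)}, X_{S(U_j)}\right),
\]
and a suitable weighted sum of these, together with source independence $H(X_a,X_b,X_g) = H(X_a)+H(X_b)+H(X_g) \geq 3r$ and the sink-decoding requirement $H(X_i \mid X_{\mathrm{In}(t(i))}, X_{S(t(i))}) = 0$, should telescope into the desired bound.

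The main obstacle is combinatorial bookkeeping rather than any new inequality: because all three commodity paths meet at the central vertex, the input-output inequalities for different commodities touch common edges, and naive addition would over-count their entropies relative to the unit-capacity budget on those shared edges. This is where submodularity $H(X_{A_1}) + H(X_{A_2}) \geq H(X_{A_1 \cup A_2}) + H(X_{A_1 \cap A_2})$ must be used to split joint entropies cleanly between commodities so that every directed edge is charged at most once on the right-hand side. A useful guide for choosing the subsets and weights is linear-programming duality with the fractional routing LP: the optimal primal routing scheme that certifies rate $8/7$ has a dual that tells us which cuts are tight, and those cuts are precisely the $U_j$ to target. I would therefore first write down the optimal fractional routing and read off the associated dual weights, then emulate the Okamura-Seymour arguments of \cite{Yeung2005:1}, \cite{Harvey2006:1} and \cite{Savari2006:1} to turn these weights into a valid entropy inequality, finishing with the arithmetic $7r \leq 8$.
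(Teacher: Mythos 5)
There is a genuine gap: what you have written is a proof \emph{plan}, not a proof. You correctly identify the lower bound (routing achieves $8/7$), the correct target inequality ($7r \leq |E| = 8$, matching the Wiener bound $8/7$), and the correct toolbox (input-output inequality, submodularity, source independence, sink decodability, and the capacity constraint $H(X_{uv})+H(X_{vu})\leq 1$). This is exactly the framework the paper uses. But the entire substantive content of the argument --- which vertex sets to apply the input-output inequality at, with what weights, and in what order to interleave the submodularity and decodability steps so that no directed edge is over-charged --- is deferred to phrases like ``a suitable weighted sum \ldots should telescope into the desired bound'' and ``emulate the Okamura-Seymour arguments.'' That bookkeeping is not routine; it is the theorem. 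You yourself flag it as ``the main obstacle'' and then do not resolve it.

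Concretely, the paper's proof derives two separate inequalities, each charging every undirected edge in exactly one orientation. Applying the input-output inequality at the hub nodes $g$, $h$, $f$ and merging via submodularity (inserting the decodability of $X_b$ from $\{X_{bg},X_{bh}\}$ and of $X_a$ from $\{X_{gc},X_{hc},X_{fc}\}$ at the right moments) yields $H(X_a)+2H(X_b)+H(X_g) \leq \sum H(X_{vg})+\sum H(X_{vh})+\sum H(X_{vf})$, using only the eight directed edges pointing \emph{into} the hubs. A symmetric derivation at the outer nodes $a$, $b$, $c$ yields $H(X_a)+H(X_b)+H(X_g) \leq \sum H(X_{gv})+\sum H(X_{hv})+\sum H(X_{fv})$, using only the eight reversed edges. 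Adding the two pairs each $H(X_{ij})$ with its reverse $H(X_{ji})$, so the right side collapses to $8$ and the left side is $2H(X_a)+3H(X_b)+2H(X_g)\geq 7r$. Until you exhibit such an explicit pairing (or an equivalent weighted family of cuts with verified edge coefficients), the bound $r\leq 8/7$ has not been established; the LP-duality heuristic tells you where to look, but does not by itself certify that a valid entropy inequality with those weights exists.
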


\begin{proof}
Applying the input-output inequality at node $g$ we obtain
\begin{eqnarray}
H(\!X_{g},\!X_{ag},\!X_{bg},\!X_{cg},\!X_{ga},\!X_{gb},\!X_{gc}\!) \! \leq \! H(\!X_{g},\!X_{ag},\!X_{bg},\!X_{cg}\!)
\label{eq:d}
\end{eqnarray}
and at node $h$ we obtain
\begin{eqnarray}
H(X_{g},X_{ah},X_{bh},X_{ch},X_{ha},X_{hb},X_{hc}) \leq H(X_{ah},X_{bh},X_{ch})
\label{eq:e}
\end{eqnarray}
adding (\ref{eq:d}) and (\ref{eq:e}) and using submodularity in the LHS and the union bound in the RHS we obtain
\begin{eqnarray}
H(X_{g},X_{E^{\prime}}) \leq H(X_{ag},X_{bg},X_{cg}) + H(X_{ah},X_{bh},X_{ch})
\label{eq:de}
\end{eqnarray}
where $X_{E^{\prime}} = X_{E_{d}} \bs \{X_{af},X_{fa},X_{cf},X_{fc}\}$ and $E_{d}$ is the set of directed edges obtained from $E$. Now consider node $b$ and note that $X_{b}$ must be delivered through the directed edges $bh$ and $bg$, i.e., $X_{b}$ is a function of $X_{bg}$ and $X_{bh}$. Since $X_{bg}, X_{bh} \in X_{E^{\prime}}$, then from (\ref{eq:de}) we can write
%
\begin{eqnarray}
H(X_{b},X_{g},X_{E^{\prime}}) \leq  H(X_{ag},X_{bg},X_{cg}) + H(X_{ah},X_{bh},X_{ch})
\label{eq:deb}
\end{eqnarray}
The input-output inequality at $f$ gives $H(X_{b},X_{af},X_{cf},X_{fa},X_{fc}) \leq H(X_{af},X_{cf})$. Adding this to (\ref{eq:deb}) and using submodularity at the LHS we obtain $H(X_{b}) + H(X_{b},X_{g},X_{E}) \leq H(X_{af},X_{cf}) + H(X_{ag},X_{bg},X_{cg}) + H(X_{ah},X_{bh},X_{ch})$. From node $c$, we know $X_{a}$ must be recovered from $X_{gc},X_{hc},X_{fc} \in X_{E}$ and thus we can write
%
\begin{eqnarray}
H(X_{b})  +  H(X_{a},X_{b},X_{g},X_{E})   &\leq&   H(X_{ag},X_{bg},X_{cg}) 
 + H(X_{ah},X_{bh},X_{ch}) 
 + H(X_{af},X_{cf}) \\
&\leq&  H(X_{ag})\!+\!H(X_{bg})+ H(X_{cg})  
 + H(X_{ah}) \nonumber \\
&{}& + H(X_{bh})\!+\!H(X_{ch}) + H(X_{af})\!+\!H(X_{cf})
\label{eq:debc}
\end{eqnarray}
Since the sources are independent and $H(X_{a},X_{b},X_{g}) \leq H(X_{a},X_{b},X_{g},X_{E})$, (\ref{eq:debc}) gives
\begin{eqnarray}
H(X_a) + 2H(X_{b}) + H(X_{g})  &\leq&  H(X_{ag})\!+\!H(X_{bg})\!+\!H(X_{cg}) + H(X_{ah}) \nonumber \\
&{}& + H(X_{bh}) + H(X_{ch}) + H(X_{af}) + H(X_{cf})
\label{eq:debc2}
\end{eqnarray}

Now we apply the input-output inequality at node $a$ and write
\begin{eqnarray}
H(X_{a},X_{ga},X_{ha},X_{fa},X_{ag},X_{ah},X_{af}) \leq H(X_{a},X_{ga},X_{ha},X_{fa})
\label{eq:a}
\end{eqnarray}
and at node $c$ we obtain
\begin{eqnarray}
H(X_{a},X_{gc},X_{hc},X_{fc},X_{cg},X_{ch},X_{cf}) \leq H(X_{gc},X_{hc},X_{fc})
\label{eq:c}
\end{eqnarray}
computing (\ref{eq:a}) + (\ref{eq:c}) and using submodularity we get
\begin{eqnarray}
H(X_{a},X_{E^{\prime \prime}}) \leq H(X_{ga},X_{ha},X_{fa}) +  H(X_{gc},X_{hc},X_{fc})
\label{eq:ac}
\end{eqnarray}
where $X_{E^{\prime \prime}} = X_{E_{d}} \bs \{X_{gb},X_{bg},X_{bh},X_{hb}\}$. Since node $f$ demands $X_{b}$, $X_{b}$ must be a function of $X_{af},X_{cf} \in X_{E^{\prime \prime}}$. Therefore,
\begin{eqnarray}
H(X_{a},\!X_{b},\!X_{E^{\prime \prime}}) \! \leq \! H(X_{ga},\!X_{ha},\!X_{fa}) \! + \! H(X_{gc},\!X_{hc},\!X_{fc})
\label{eq:acf}
\end{eqnarray}
Applying input-output inequality at node $b$ gives $H(X_{b},X_{gb},X_{hb},X_{bg},X_{bh}) \leq H(X_{b},X_{gb},X_{hb})$. Adding this to (\ref{eq:acf}) and using submodularity we get $H(X_{a},X_{b},X_{E}) \leq H(X_{ga},X_{ha},X_{fa}) +  H(X_{gc},X_{hc},X_{fc}) + H(X_{gb},X_{hb})$. Since $X_{g}$ is recoverable from $X_{bh}, X_{ah}, X_{ch} \in X_{E}$, we can write
\begin{eqnarray}
H(X_{a},X_{b},X_{g},X_{E}) \leq H(X_{ga},X_{ha},X_{fa}) \!+\!  H(X_{gc},X_{hc},X_{fc})  + \ H(X_{gb},X_{hb})
\label{eq:acfe}
\end{eqnarray}
Thus, using independence of sources on the LHS and the union bound on the RHS we obtain
\begin{eqnarray}
H(X_{a})+H(X_{b})+H(X_{g}) 
\!\!\!\! &\leq& \!\!\!\! H(X_{ga}) \!+\! H(X_{ha}) \!+\! H(X_{fa})  \nonumber \\
&{}&   \!\!\!\!\!\! + H(X_{gc}) \!+\! H(X_{hc}) \!+\! H(X_{fc})  \nonumber \\
&{}& \!\!\!\!\!\! +   H(X_{gb}) \!+\! H(X_{hb})
\label{eq:acfe2}
\end{eqnarray}

Adding (\ref{eq:debc2}) and (\ref{eq:acfe2}) and noting that $H(X_{ij}) + H(X_{ji}) \leq 1$ we obtain
\begin{eqnarray}
2r_{a} \!+\! 3r_{b} \!+\! 2r_{g} \! \leq \! 2H(X_{a}) \!+\! 3H(X_{b}) \!+\! 2H(X_{g}) \! \leq \! 8 
\end{eqnarray}
The theorem follows by setting $r_{a} = r_{b} = r_{g} = r$ and noting that there exist a fractional routing scheme achieving rate $8/7$. 
\end{proof}

\subsection{Networks on Bipartite Graphs}

In this subsection we consider undirected $k$-pairs networks with underlying bipartite graphs. Let ${\cal N}$ be a $k$-pairs undirected network with a set of commodities ${\cal I}$ and an underlying bipartite graph $G(V \cup W,E)$. This problem was considered in \cite{Harvey2006:1} for the case when each commodity ${i \in \cal I}$ is such that $s(i)$ and $t(i)$ are located in the same partition $V$ or $W$. 
In this section we extend this study to any bipartite $k$-pairs network. Let ${\cal I}_{VV} = \{i \in S(V): t(i) \in V\}$ be the set of all commodities whose sources and sinks are in $V$, also let ${\cal I}_{VW} = \{i \in S(V): t(i) \in W\}$ be the set of all commodities whose sources are in $V$ and sinks are in $W$. On the other hand, let ${\cal I}_{WW} = \{i \in S(W): t(i) \in W\}$ be the set of all commodities whose sources and sinks are in $W$ and ${\cal I}_{WV} = \{i \in S(W): t(i) \in V\}$ be the set of commodities from $W$ to $V$.

The following is a Lemma required in proving the next theorem. We present the lemma without a proof and refer the interested reader to \cite{Harvey2006:1} where a stronger result was proven.

\begin{Lemma}
For any collection of sets $A_{1},\ldots, A_{n}$
\[
\sum_{i=1}^{n} H(X_{A_{i}}) \geq H(X_{\bigcup_{i=1}^{n}A_{i}}) + H(X_{\bigcup_{1\leq i<j\leq n}^{n} {A_{i} \cap A_{j}}}) 
\]
\label{Lemma:BiP}
\end{Lemma}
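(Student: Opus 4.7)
The plan is to prove Lemma \ref{Lemma:BiP} by induction on $n$, using only the submodularity of entropy recalled at the end of Section~II. The base case $n=2$ is exactly submodularity $H(X_{A_1}) + H(X_{A_2}) \geq H(X_{A_1 \cup A_2}) + H(X_{A_1 \cap A_2})$, and the trivial case $n=1$ (with the convention $X_\emptyset$ constant) holds as an equality.

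For the inductive step, assume the inequality for $n-1$ sets and add $H(X_{A_n})$ to both sides. This produces three entropy terms on the right-hand side, and I would like to fold $A_n$ in by applying submodularity twice. First, I would apply submodularity to the pair $H(X_{A_n}) + H(X_{\bigcup_{i=1}^{n-1} A_i})$, obtaining $H(X_{\bigcup_{i=1}^{n} A_i}) + H(X_{\bigcup_{i=1}^{n-1}(A_i \cap A_n)})$ after using the distributive identity $A_n \cap \bigcup_{i<n} A_i = \bigcup_{i<n}(A_i \cap A_n)$. Substituting this back yields
\begin{equation*}
\sum_{i=1}^{n} H(X_{A_i}) \geq H(X_{\bigcup_{i=1}^n A_i}) + H(X_{\bigcup_{i<n}(A_i \cap A_n)}) + H(X_{\bigcup_{1\leq i<j\leq n-1} A_i \cap A_j}).
\end{equation*}

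Next I would apply submodularity to the last two entropies. The union of their arguments is precisely $\bigcup_{1 \leq i < j \leq n}(A_i \cap A_j)$ — the new pairs $\{i,n\}$ with $i<n$ coming from the first term and the old pairs $\{i,j\}$ with $i<j\leq n-1$ from the second. Submodularity gives a lower bound of $H(X_{\bigcup_{1 \leq i < j \leq n}(A_i \cap A_j)})$ plus the entropy of the intersection, and I would simply drop the intersection term using non-negativity of entropy.

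The only real bookkeeping obstacle is verifying the set identity $\bigcup_{i<n}(A_i \cap A_n) \cup \bigcup_{i<j\leq n-1}(A_i \cap A_j) = \bigcup_{i<j\leq n}(A_i \cap A_j)$, which is immediate by partitioning the pairs $\{i,j\}$ with $i<j\leq n$ according to whether $j=n$ or $j<n$. No other subtlety is expected; the argument is a clean double application of submodularity inside a standard induction.
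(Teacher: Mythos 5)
Your induction is correct: the base case $n=2$ is submodularity itself, and in the inductive step the two applications of submodularity --- first to the pair $H(X_{A_n})$, $H(X_{\bigcup_{i<n}A_i})$, then to the two ``pairwise intersection'' terms --- combined with the identity $A_n\cap\bigcup_{i<n}A_i=\bigcup_{i<n}(A_i\cap A_n)$ and the partition of the pairs $\{i,j\}$ with $i<j\le n$ according to whether $j=n$, do exactly what you claim; discarding the final intersection entropy is harmless since entropy is nonnegative. Be aware, though, that the paper offers no proof of this lemma to compare against: it is stated without proof, with the reader referred to \cite{Harvey2006:1}, where a \emph{stronger} result is established (a full chain of the form $\sum_i H(X_{A_i})\ge\sum_j H(X_{B_j})$, with $B_j$ the set of elements lying in at least $j$ of the $A_i$; the lemma here is the truncation of that chain to its first two terms). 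So your argument is not an alternative to the paper's proof so much as a self-contained, elementary replacement for the missing one: iterating submodularity yields precisely the two-term inequality that Theorem \ref{Theorem:BiP} actually needs, while the cited route buys the sharper multi-term refinement at the cost of a longer argument. Your proof is sound and sufficient for the use made of the lemma in this paper.
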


\begin{Theorem}
For an undirected $k$-pairs network on a bipartite graph $G(V \cup W,E)$ with a set of commodities ${\cal I} = {\cal I}_{VV} \cup {\cal I}_{VW} \cup {\cal I}_{WV} \cup {\cal I}_{WW}$, the network coding rate is bounded as
\[
r \leq \frac{|E|}{(|{\cal I}_{VW}| + |{\cal I}_{WV}|) + 2(|{\cal I}_{VV}| + |{\cal I}_{WW}|)}
\]
\label{Theorem:BiP}
\end{Theorem}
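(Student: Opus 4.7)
The plan is to mirror the strategy used in the proof of Theorem~\ref{Theorem:N_Hu}: apply the input-output inequality at each node, one side of the bipartition at a time, sum the resulting inequalities, push Lemma~\ref{Lemma:BiP} through the left-hand sides so that commodities with both endpoints on the same side pick up coefficient $2$, and bound the right-hand side by subadditivity and source independence. Finally, add the contributions from $V$ and $W$ and use the per-undirected-edge capacity constraint $H(X_{uv}) + H(X_{vu}) \leq 1$ to collapse the edge-entropy total to $|E|$.

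Concretely, for each $u \in V$ I would start from
\[
H(X_{\mathrm{In}(u)}, X_{S(u)}, X_{\mathrm{Out}(u)}, X_{T(u)}) \leq H(X_{\mathrm{In}(u)}, X_{S(u)})
\]
and sum over $u \in V$. On the left, Lemma~\ref{Lemma:BiP} applies cleanly because in a bipartite graph the incident edge sets are disjoint across distinct $V$-nodes; the only overlaps among the $A_{u}$ collections come from commodity variables $X_{i}$ with $i \in {\cal I}_{VV}$, which sit at both $s(i)$ and $t(i)$, so they contribute to the union-of-intersections term. The union $\bigcup_{u \in V} A_{u}$ itself covers $X_{E_{d}}$ together with every $X_{i}$ whose source or sink lies in $V$. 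On the right, subadditivity and independence of sources give $\sum_{u \in V} H(X_{\mathrm{In}(u)}, X_{S(u)}) \leq \sum_{e \in \mathrm{In}(V)} H(X_{e}) + H(X_{{\cal I}_{VV}}) + H(X_{{\cal I}_{VW}})$. Combining, exploiting that $X_{E_{d}}$ is a deterministic function of $X_{{\cal I}}$ so that $H(X_{E_{d}}, X_{{\cal I}}) = H(X_{{\cal I}})$, and cancelling the source terms that appear on both sides, the plan is to arrive at the consolidated $V$-side inequality
\[
H(X_{{\cal I}_{VV}}) + H(X_{{\cal I}_{WV}}) + H(X_{{\cal I}_{WW}}) \leq \sum_{e \in \mathrm{In}(V)} H(X_{e}).
\]

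The completely symmetric manipulation with $W$ in place of $V$ then gives $H(X_{{\cal I}_{WW}}) + H(X_{{\cal I}_{VW}}) + H(X_{{\cal I}_{VV}}) \leq \sum_{e \in \mathrm{In}(W)} H(X_{e})$. Adding these two inequalities and using $\sum_{e \in E_{d}} H(X_{e}) \leq |E|$ together with $H(X_{i}) \geq r$ for every commodity yields $r\bigl[(|{\cal I}_{VW}|+|{\cal I}_{WV}|) + 2(|{\cal I}_{VV}|+|{\cal I}_{WW}|)\bigr] \leq |E|$, which rearranges to the claim.

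The delicate step I expect to struggle with is justifying why the left-hand side of the $V$-side calculation carries a full $H(X_{{\cal I}_{WW}})$ contribution even though the ${\cal I}_{WW}$ commodities have neither source nor sink in $V$ and so do not visibly appear in any $A_{u}$. The argument has to exploit that every $X_{i}$ with $i \in {\cal I}_{WW}$ is recovered at its $W$-sink as a deterministic function of edge variables and of $W$-side sources, so that once $X_{E_{d}}$ enters the union the identity $H(X_{E_{d}},X_{{\cal I}}) = H(X_{{\cal I}})$ implicitly folds in the ${\cal I}_{WW}$ entropy via a careful chain-rule computation. Making this step precise so that the $H(X_{{\cal I}_{WW}})$ term survives the cancellation on the $V$-side---and symmetrically so that $H(X_{{\cal I}_{VV}})$ survives on the $W$-side---is what drives the factor of $2$ on $|{\cal I}_{VV}|+|{\cal I}_{WW}|$ in the denominator and is the crux of the whole bookkeeping.
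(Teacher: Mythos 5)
Your proposal is correct and follows essentially the same route as the paper's proof: input-output inequalities summed over one partition, Lemma~\ref{Lemma:BiP} to extract the extra $H(X_{{\cal I}_{VV}})$ (resp.\ $H(X_{{\cal I}_{WW}})$) term from the pairwise intersections $S(u)\cap T(v)$, bipartiteness to make the union cover $E_{d}$ so that $H(X_{\cal I})$ lower-bounds it, and the symmetric $V$/$W$ addition with $\sum_{e\in E_{d}}H(X_{e})\leq|E|$; your consolidated per-side inequality is exactly what the paper's inequality (\ref{eq:BiP_V}) becomes after invoking source independence and cancelling $H(X_{S(V)})$. The ``delicate step'' you flag (why $H(X_{{\cal I}_{WW}})$ enters on the $V$-side) is handled in the paper by the same one-line assertion that all commodities are recoverable from $X_{E_{d}}$ together with the local source/sink variables, so you are at parity with the paper there.
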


\begin{proof}
Applying the input-output inequality at each node $v \in V$ we can write
\begin{eqnarray}
\!\!\!\!\!\!\!\!\!\! H(X_{S(v)},\! X_{\mbox{In}(v)}, \! X_{\mbox{Out}(v)}, \! X_{T(v)}) \!\!\!\!\! &\leq& \!\!\!\!\! H(X_{S(v)}, X_{\mbox{In}(v)}) \\
&{}&  \!\!\!\!\!\!\!\!\!\!\!\!\!\!\!\!\!\!\!  \leq H(X_{S(v)}) \!+\! H(X_{\mbox{In}(v)})
\label{eq:BiP1}
\end{eqnarray}
adding (\ref{eq:BiP1}) over all $v \in V$ we can write
\begin{eqnarray}
H(X_{S(V)}, &{}& \hspace{-.9cm} X_{\mbox{In}(V)}, X_{\mbox{Out}(V)}, X_{T(V)}) + H(X_{{\cal I}_{VV}}) \nonumber \\ 
&\leq& \sum_{v \in V} H(X_{S(v)}) + \sum_{v \in V} H( X_{\mbox{In}(v)} ) \label{eq:BiP2_1} \\
&\leq& \sum_{v \in V} H(X_{S(v)}) + \sum_{v \in V} \sum_{e \in \mbox{In}(v)} H(X_{e}) \label{eq:BiP2_2} \\
&=& \sum_{v \in V} H(X_{S(v)}) + \sum_{e \in \mbox{In}(V)} H(X_{e}) \label{eq:BiP2_3}
\end{eqnarray}
The LHS of (\ref{eq:BiP2_1}) follows from Lemma \ref{Lemma:BiP} where the second term is obtained by noting that $S(v) \cap T(u) \subseteq {\cal I}_{VV}$, $\forall u,v \in V$ and $\bigcup_{\{u,v\} \subseteq V}{S(u) \cap T(v)} = {\cal I}_{VV}$. (\ref{eq:BiP2_2}) follows from the union bound and (\ref{eq:BiP2_3}) follows since a node $v \in V$ has no neighbors in $V$ (from the definition of a bipartite). Since the underlying graph is bipartite we know that $\mbox{In}(V) \cup \mbox{Out}(V) = E_{d}$. Using this in the LHS of (\ref{eq:BiP2_3}) and noting that all commodities must be communicated via $E_{d}$, we can write
\begin{eqnarray}
H(X_{\cal I}) + H(X_{{\cal I}_{VV}}) \leq \sum_{v \in V} H(X_{S(v)}) + \sum_{e \in \mbox{In}(V)} H(X_{e})
\label{eq:BiP_V}
\end{eqnarray}
Using the same argument at the partition $W$, we obtain
\begin{eqnarray}
H(X_{\cal I}) + H(X_{{\cal I}_{WW}}) \leq \sum_{w \in W} H(X_{S(w)}) + \sum_{e \in \mbox{In}(W)} H(X_{e})
\label{eq:BiP_W}
\end{eqnarray}
Adding (\ref{eq:BiP_V}) and (\ref{eq:BiP_W}) and noting that $\mbox{In}(W) = \mbox{Out}(V)$ we obtain
\begin{eqnarray}
\!\!\!\!\!\!\!\!\!\!\!\!\! 2H(X_{\cal I}) \! + \! H(X_{{\cal I}_{VV}}) \! + \! H(X_{{\cal I}_{WW}}) 
\!\!\!\! &\leq& \!\!\!\!\! \!\!\! \sum_{v \in V \cup W} \!\!\!\!\! H(X_{S(v)}) \! + \! |E| \label{eq:fin} \\
&=& \sum_{i \in {\cal I}} H(X_{i}) + |E|
\end{eqnarray}
where in (\ref{eq:fin}) we used the fact that $\sum_{e \in \mbox{In}(W)} H(X_{e}) + \sum_{e \in \mbox{Out}(W)} H(X_{e}) \leq \sum_{e \in E} 1 = |E|$. Using the independence of the sources on the LHS to get
\begin{eqnarray}
\sum_{i \in {\cal I}} H(X_{i}) + \sum_{i \in {\cal I}_{VV}} H(X_{i}) + \sum_{i \in {\cal I}_{WW}} H(X_{i}) \leq |E|
\end{eqnarray}
Since $r_{i} \leq H(X_{i})$, the previous result become
\begin{eqnarray}
\sum_{i \in {\cal I}} r_{i} + \sum_{i \in {\cal I}_{VV}} r_{i} + \sum_{i \in {\cal I}_{WW}} r_{i} \leq |E|
\end{eqnarray}
and the symmetric rate can be bounded as
\begin{eqnarray}
r \leq \frac{|E|}{|{\cal I}| + |{\cal I}_{VV}| + |{\cal I}_{WW}|}
\end{eqnarray}
The theorem follows by noting that $|{\cal I}| = |{\cal I}_{VV}| + |{\cal I}_{VW}| + |{\cal I}_{WV}| + |{\cal I}_{WW}|$.
\end{proof}

In the following we provide two subclasses of $k$-pairs bipartite networks for which the $k$-pairs conjecture holds true. Let ${\cal N}$ be a $k$-pairs bipartite network with a set of commodities ${\cal I}$ and an underlying \emph{complete} bipartite graph $G(V \cup W, E)$, i.e, $G = K_{|V|,|W|}$. Then ${\cal N}$ is a Type-I $k$-pairs bipartite network if the following three conditions are satisfied
\begin{itemize}
\item For every unordered pair of distinct vertices $u, v \in V$ there exists a commodity $i \in {\cal I}$ such that $\{s(i),t(i)\} = \{u,v\}$.
\item For every unordered pair of distinct vertices $u, v \in W$ there exists a commodity $i \in {\cal I}$ such that $\{s(i),u(i)\} = \{u,v\}$.
\item There is no commodity $i \in {\cal I}$ such that $s(i) \in V$ and $t(i) \in W$ or $s(i) \in W$ and $t(i) \in V$.
\end{itemize} 
On the other hand, ${\cal N}$ is a Type-II network if for every unordered pair of distinct vertices $u, v \in V \cup W$ there exists a commodity $i \in {\cal I}$ such that $\{s(i),u(i)\} = \{u,v\}$.

\vspace{.4cm}
\begin{Corollary}
The $k$-pairs conjecture holds for Type-I networks.
\label{cor:Type-I}
\end{Corollary}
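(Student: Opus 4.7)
The plan is to match the network-coding upper bound from Theorem \ref{Theorem:BiP} with an explicit, symmetric fractional routing scheme. First, I would specialize Theorem \ref{Theorem:BiP} to the Type-I setting: the third defining condition forces $|{\cal I}_{VW}| = |{\cal I}_{WV}| = 0$, while the first two conditions (together with the one-commodity-per-unordered-pair convention) give $|{\cal I}_{VV}| = \binom{|V|}{2}$ and $|{\cal I}_{WW}| = \binom{|W|}{2}$. Since the underlying $K_{|V|,|W|}$ has $|E| = |V|\,|W|$ edges, Theorem \ref{Theorem:BiP} yields
\[
r \;\leq\; \frac{|V|\,|W|}{|V|(|V|-1) + |W|(|W|-1)}.
\]

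Writing $n = |V|$ and $m = |W|$, the routing scheme I would construct is the obvious symmetric one. Every commodity has both endpoints in the same partition, so its source--sink shortest path has length exactly two through a single vertex of the opposite partition. For each commodity $i \in {\cal I}_{VV}$ with endpoints $\{u,v\} \subseteq V$, I would route a flow of rate $r/m$ along each of the $m$ two-hop paths $u\text{--}w\text{--}v$, $w \in W$. Analogously, each commodity in ${\cal I}_{WW}$ is split evenly over the $n$ two-hop paths through $V$, at rate $r/n$ per path.

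The verification is a short edge-counting step. Fix an edge $\{u,w\}$ with $u\in V$, $w\in W$. The commodities in ${\cal I}_{VV}$ loading this edge are exactly those having $u$ as an endpoint ($n-1$ of them), each contributing $r/m$; the commodities in ${\cal I}_{WW}$ loading it are those having $w$ as an endpoint ($m-1$ of them), each contributing $r/n$. The total load is
\[
(n-1)\frac{r}{m} + (m-1)\frac{r}{n} \;=\; r\cdot\frac{n(n-1)+m(m-1)}{nm},
\]
which respects the unit capacity iff $r \leq \frac{nm}{n(n-1)+m(m-1)}$. Hence the upper bound is achieved by a valid fractional routing, so network coding offers no improvement and the corollary follows.

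I do not anticipate a real obstacle; the only subtlety worth spelling out is that the denominator $|V|(|V|-1)+|W|(|W|-1)$ arising from Theorem \ref{Theorem:BiP} equals $D_{{\cal N}}$ in a complete bipartite graph, because every same-partition pair is at distance exactly two. The network-coding bound therefore collapses to the Wiener bound $W_{{\cal N}}$, which is met by shortest-path routing in this highly symmetric topology.
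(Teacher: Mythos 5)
Your proposal is correct and follows essentially the same approach as the paper: specialize Theorem \ref{Theorem:BiP} with $|{\cal I}_{VW}| = |{\cal I}_{WV}| = 0$, $|{\cal I}_{VV}| = \binom{|V|}{2}$, $|{\cal I}_{WW}| = \binom{|W|}{2}$, $|E| = |V|\cdot|W|$, and exhibit a matching fractional routing scheme. In fact you go further than the paper, which explicitly omits the existence proof of the routing scheme; your even-split-over-two-hop-paths construction and the edge-load check $(|V|-1)r/|W| + (|W|-1)r/|V| \leq 1$ supply exactly the missing piece.
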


\begin{proof}(sketch)
The corollary follows by showing that there exists a fractional routing scheme achieving the rate $r$ in theorem \ref{Theorem:BiP} with $|{\cal I}_{VW}| = |{\cal I}_{WV}| = 0$, $|{\cal I}_{VV}| = \binom{|V|}{2}, |{\cal I}_{WW}| = \binom{|W|}{2}$ and $|E| = |V|.|W|$. The proof of the existence of such routing scheme is omitted.
\end{proof}

\begin{Corollary}
The $k$-pairs conjecture holds for Type-II networks.
\label{cor:Type-II}
\end{Corollary}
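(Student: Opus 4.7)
The plan is to follow the same two-step strategy as Corollary \ref{cor:Type-I}: combine Theorem \ref{Theorem:BiP} with an explicit fractional routing scheme that matches the resulting upper bound. Write $n = |V|$ and $m = |W|$. Since the underlying graph is $K_{n,m}$ we have $|E| = nm$, and since the Type-II commodity set places one commodity on each of the $\binom{n+m}{2}$ unordered pairs of distinct vertices we get $|{\cal I}_{VV}| = \binom{n}{2}$, $|{\cal I}_{WW}| = \binom{m}{2}$, and $|{\cal I}_{VW}| + |{\cal I}_{WV}| = nm$. Plugging these into Theorem \ref{Theorem:BiP} immediately yields the upper bound
\[
r \leq \frac{nm}{nm + n(n-1) + m(m-1)}.
\]

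Next, I would construct a shortest-path fractional routing scheme that, by symmetry, attains this rate. Each cross commodity $\{v,w\}$ with $v \in V$, $w \in W$ is routed entirely along the direct edge $\{v,w\}$; each within-$V$ commodity $\{v,v'\}$ is split uniformly over the $m$ length-two paths $v$--$w$--$v'$ indexed by $w \in W$; and symmetrically each within-$W$ commodity is split uniformly over the $n$ length-two paths through the vertices of $V$. A short load computation on a representative edge, adding the contribution $r$ from its unique cross commodity, $(n-1)r/m$ from within-$V$ commodities using the appropriate endpoint as relay, and $(m-1)r/n$ from within-$W$ commodities, yields total load $r[nm + n(n-1) + m(m-1)]/(nm)$, which meets the unit capacity exactly at the upper bound above.

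The main obstacle, compared with Type-I, is the coexistence of length-one (cross) and length-two (within-part) shortest paths that jointly load every edge; one must check that the three contributions balance to give the same total on each edge. I would lean on the full edge-transitivity of $K_{n,m}$ together with the transitivity of the Type-II commodity family under the graph's automorphisms to conclude that the uniform split produces identical load on every edge, so the single-edge computation suffices and the converse from Theorem \ref{Theorem:BiP} is matched exactly.
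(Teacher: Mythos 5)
Your proposal is correct and follows the same strategy as the paper: substitute $|E|=nm$, $|{\cal I}_{VW}|+|{\cal I}_{WV}|=nm$, $|{\cal I}_{VV}|=\binom{n}{2}$, $|{\cal I}_{WW}|=\binom{m}{2}$ into Theorem \ref{Theorem:BiP} and exhibit a fractional routing scheme meeting the resulting bound $r \leq nm/(nm+n(n-1)+m(m-1))$. In fact you go further than the paper, which omits the routing construction entirely; your explicit scheme (direct edges for cross commodities, uniform splitting over length-two paths for within-part commodities) and the per-edge load computation $r + (n-1)r/m + (m-1)r/n$ are correct and supply the missing detail.
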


\begin{proof}(sketch)
Similar to the proof of the previous corollary where in this case we have $|{\cal I}_{VW}| + |{\cal I}_{WV}| = |V|.|W|$, $|{\cal I}_{VV}| = \binom{|V|}{2}$ and $|{\cal I}_{WW}| = \binom{|W|}{2}$.
\end{proof}

\section{Acknowledgment}

The authors wish to thank one of the anonymous reviewers for pointing out that a similar result to theorem \ref{Theorem:meager} was previously obtained by N. Harvey and R. Kleinberg \cite{Harvey2005:1}.

\bibliography{C:/Research_Latex/bibliographys/Network_Coding}

\end{document}